\newcommand{\fs}{\mathcal{S}}
\newcommand{\iz}{\mathbb{Z}}
\newcommand{\wtd}{\mathcal{A}}
\newcommand{\cd}{\mathcal{C}}
\newcommand{\dcd}{\mathcal{C}^{\perp}}
\newcommand{\mtd}{\mathcal{M}}
\newcommand{\mtdc}{\mathcal{M}_{\cd}}
\newcommand{\fdq}{\mathbb{F}_q}
\newcommand{\fdqn}{\mathbb{F}_{q}^n}
\newcommand{\pst}{\mathbb{P}}
\newcommand{\dpst}{\widetilde{\mathbb{P}}}
\newcommand{\pord}{\leq_{\pst}}
\newcommand{\dpord}{\leq_{\dpst}}
\newcommand{\ideal}[1]{\langle#1\rangle_{\pst}}
\newcommand{\dideal}[1]{\langle#1\rangle_{\dpst}}
\newcommand{\compl}[1]{\overline{#1}}
\newcommand{\supp}[1]{\mathrm{supp}(#1)}
\newcommand{\pwt}[1]{wt_{\pst}(#1)}
\newcommand{\dpwt}[1]{wt_{\dpst}(#1)}
\newcommand{\abs}[1]{|#1|}
\newcommand{\rk}[1]{\mathrm{rank}(#1)}
\newcommand{\dm}[1]{\mathrm{dim}(#1)}
\newtheorem{theorem}{Theorem}
\newtheorem{proposition}[theorem]{Proposition}
\newtheorem{lemma}[theorem]{Lemma}
\newtheorem{corollary}[theorem]{Corollary}
\theoremstyle{definition}
\newtheorem{definition}[theorem]{Definition}
\begin{document}
%
% paper title
% can use linebreaks \\ within to get better formatting as desired
\title{Simple proofs for duality of generalized minimum poset weights and weight distributions of (Near-) MDS poset codes}
%
%
% author names and IEEE memberships
% note positions of commas and nonbreaking spaces ( ~ ) LaTeX will not break
% a structure at a ~ so this keeps an author's name from being broken across
% two lines.
% use \thanks{} to gain access to the first footnote area
% a separate \thanks must be used for each paragraph as LaTeX2e's \thanks
% was not built to handle multiple paragraphs
%

\author{Dae San Kim,~\IEEEmembership{Member,~IEEE,}
        Dong Chan Kim,
        and~Jong Yoon Hyun,% <-this % stops a space
\thanks{D. S. Kim is with the Department of Mathematics, Sogang University, Seoul, Korea.
Email: dskim@sogang.ac.kr.}% <-this % stops a space
\thanks{D. C. Kim is with the Department of Mathematics, Sogang University, Seoul, Korea and the Attached Institute of ETRI, P.O. Box 1, Yuseong, Daejeon, 305-600, Korea.
Email: dongchan@ensec.re.kr.}% <-this % stops a space
\thanks{J. Y. Hyun is with the Department of Mathematics, Ewha Womans University, Seoul 120-750, Korea. Email: hyun33@ewha.ac.kr.}}

% note the % following the last \IEEEmembership and also \thanks -
% these prevent an unwanted space from occurring between the last author name
% and the end of the author line. i.e., if you had this:
%
% \author{....lastname \thanks{...} \thanks{...} }
%                     ^------------^------------^----Do not want these spaces!
%
% a space would be appended to the last name and could cause every name on that
% line to be shifted left slightly. This is one of those "LaTeX things". For
% instance, "\textbf{A} \textbf{B}" will typeset as "A B" not "AB". To get
% "AB" then you have to do: "\textbf{A}\textbf{B}"
% \thanks is no different in this regard, so shield the last } of each \thanks
% that ends a line with a % and do not let a space in before the next \thanks.
% Spaces after \IEEEmembership other than the last one are OK (and needed) as
% you are supposed to have spaces between the names. For what it is worth,
% this is a minor point as most people would not even notice if the said evil
% space somehow managed to creep in.

% The paper headers
\markboth{Journal of \LaTeX\ Class Files,~Vol.~6, No.~1, January~2007}%
{Shell \MakeLowercase{\textit{et al.}}: Bare Demo of IEEEtran.cls for Journals}
% The only time the second header will appear is for the odd numbered pages
% after the title page when using the twoside option.
%
% *** Note that you probably will NOT want to include the author's ***
% *** name in the headers of peer review papers.                   ***
% You can use \ifCLASSOPTIONpeerreview for conditional compilation here if
% you desire.

% If you want to put a publisher's ID mark on the page you can do it like
% this:
%\IEEEpubid{0000--0000/00\$00.00~\copyright~2007 IEEE}
% Remember, if you use this you must call \IEEEpubidadjcol in the second
% column for its text to clear the IEEEpubid mark.

% use for special paper notices
%\IEEEspecialpapernotice{(Invited Paper)}

% make the title area
\maketitle

\begin{abstract}
%\boldmath
In 1991, Wei introduced generalized minimum Hamming weights for
linear codes and showed their monotonicity and duality. Recently,
several authors extended these results to the case of generalized
minimum poset weights by using different methods. Here, we would
like to prove the duality by using matroid theory. This gives yet
another and very simple proof of it. In particular, our argument
will make it clear that the duality follows from the well-known
relation between the rank function and the corank function of a
matroid. In addition, we derive the weight distributions of linear MDS and Near-MDS poset codes in the same spirit.
\end{abstract}
% IEEEtran.cls defaults to using nonbold math in the Abstract.
% This preserves the distinction between vectors and scalars. However,
% if the journal you are submitting to favors bold math in the abstract,
% then you can use LaTeX's standard command \boldmath at the very start
% of the abstract to achieve this. Many IEEE journals frown on math
% in the abstract anyway.

% Note that keywords are not normally used for peerreview papers.
\begin{IEEEkeywords}
duality, generalized minimum poset weight, weight distribution, MDS poset code, Near-MDS poset code, matroid.
\end{IEEEkeywords}

% For peer review papers, you can put extra information on the cover
% page as needed:
% \ifCLASSOPTIONpeerreview
% \begin{center} \bfseries EDICS Category: 3-BBND \end{center}
% \fi
%
% For peerreview papers, this IEEEtran command inserts a page break and
% creates the second title. It will be ignored for other modes.
\IEEEpeerreviewmaketitle

%%%%%%%%%%%%%%%%%%%%%%%%%%%%%%%%%%%%%%%%%%%%%%%%%%%%%%%%%%%%%%%%%%%%%%
%%%%%%%%%%%%%%%%%%%%%%%%%%%%%%%%%%%%%%%%%%%%%%%%%%%%%%%%%%%%%%%%%%%%%%
%%%%%%%%%%%%%%%%%%%%%%%%%%%%%%%%%%%%%%%%%%%%%%%%%%%%%%%%%%%%%%%%%%%%%%

\section{Introduction}

In 1991, Wei introduced the notion of generalized minimum Hamming weights for linear codes \cite{12} and showed their monotonicity and duality, motivated by its application to cryptography \cite{9}.  Actually, similar properties were considered earlier for irreducible cyclic codes by Helleseth, Kl{\o}ve and Mykkeltveit in \cite{6}.

Poset codes were first introduced in \cite{2}. They are just nonempty subsets in $\fdqn$, equipped with any poset weight instead of the usual Hamming weight. By using different methods, the duality and monotonicity results were extended to the case of generalized minimum poset weights for linear poset codes independently by Barg and Purkayastha \cite{1} and de Oliveira  Moura and Firer \cite{8}. Later, Choi and Kim \cite{3} also showed the duality for generalized minimum poset weights by exploiting yet another method.

Here, we would like to explain very briefly how the duality result is proved  in each case of \cite{1}, \cite{3}, and \cite{8}. Barg and Purkayastha in \cite{1},  as in the case of Wei's original proof in \cite{12}, do not adopt the matroid theory and exploit instead  parity check and generator matrices for linear codes. The authors in \cite{8} adopt the geometric formulation of the generalized minimum Hamming weights for projective systems in \cite{11} and use multi-set techniques, originated from \cite{5} and \cite{10}, in order to extend the proofs in [11, Theorem 4.1] to the case of generalized minimum poset weights. So their proof is far different from the original proof of Wei in \cite{12}.  Choi and Kim in \cite{3} define $P(\cd)$ and $RP(\cd)$ for linear codes $\cd$, and show the duality by using these. In doing so, they obtain more information than just the duality result.

The aim of this paper is to present simple proofs for the duality of the  generalized minimum poset weights and the weight distributions of linear MDS and Near-MDS poset codes by using only very basic facts of matroid theory \cite{13}.

In more detail, Theorem \ref{thm002} is fundamental in proving the duality in Theorem \ref{thm003} and an analogue of the corresponding Theorem 2 in \cite{12}. One remark here is that while the description involving inequality only is given in \cite{12}, that involving both inequality and equality is stated in our case(cf. (\ref{equ002}), (\ref{equ003})). We emphasize here that in showing Theorems \ref{thm002} and \ref{thm003} we only need the facts in Lemma \ref{lem001}, all of which are trivial except  perhaps (g). It is a special case of (\ref{equ001}) applied to the matroid $\mtd_\cd$ of the linear code $\cd$, and hence we may say that  the duality really follows from the well-known relation between the rank function and the corank function of a matroid. The weight distributions of linear Near-MDS poset codes were investigated in [1, Theorem 4.1] by using orthogonal array. Here we deduce them in the same spirit as showing the duality theorem. Our proof depends on the formula in (\ref{equ004}) and needs information about the values of the rank (or corank) function of the associated matroid of linear MDS and Near-MDS poset codes. For Near-MDS poset codes, we need again the relation between the rank function and the corank function of a matroid in order to have that information.

%%%%%%%%%%%%%%%%%%%%%%%%%%%%%%%%%%%%%%%%%%%%%%%%%%%%%%%%%%%%%%%%%%%%%%
%%%%%%%%%%%%%%%%%%%%%%%%%%%%%%%%%%%%%%%%%%%%%%%%%%%%%%%%%%%%%%%%%%%%%%
%%%%%%%%%%%%%%%%%%%%%%%%%%%%%%%%%%%%%%%%%%%%%%%%%%%%%%%%%%%%%%%%%%%%%%

\section{Preliminaries}

The following notations will be used throughout this paper.

\begin{itemize}

\item $\fdq$ the finite field with $q$ elements

\item $[n] = \{1, \ldots, n \}$

\item For $J \subseteq [n]$, $\compl{J} = [n] \setminus J$

\item $\supp{u} = \{i : u_i \not= 0\}$, for $u=(u_1,\ldots, u_n) \in \fdqn$

\item $wt_H(u) = |\supp{u}|$ the Hamming weight of $u$

\item $\supp{D} = \cup_{u\in D} \supp{u}$, for a subset $D\subseteq \fdqn$

\item $\pst = ([n], \pord)$ a fixed poset on $[n]$

\item $\dpst = ([n], \dpord)$ the dual poset of $\pst$ on $[n]$ (i.e., $i\dpord j \Leftrightarrow j \pord i$)

\item A subset $J\subseteq [n]$ is an ideal in $\pst$ if $j \in J$ and $i \pord j$ $\Rightarrow$ $i\in J$

\item For any $J \subseteq[n]$, $\ideal{J}$ denotes the smallest ideal containing $J$ (i.e., $\ideal{J} = \{i : i \pord j, \text{ for some }j\in J\}$)

\item $\pwt{u} = \abs{\ideal{\supp{u}}}$

\item $\pwt{D} = \abs{\ideal{\supp{D}}}$

\item For $J\subseteq [n]$, $u=(u_1,\ldots, u_n) \in \fdqn$, and $D\subseteq \fdqn$,
\begin{equation*}
u|J = (u_i)_{i\in J}, \quad D|J = \{ u|J : u\in D\}
\end{equation*}

\item $\cd$ an $[n,k]$ code over $\fdq$, with a generator matrix $G$ (an $k\times n$ matrix with rank $k$) and a parity check matrix $H$ (an $(n-k)\times n$ matrix with rank $n-k$), and with $\rho$ and $\rho^\perp$ respectively the rank function and the corank function of the matriod $\mtdc$ of $\cd$. Such a $\cd$ will be viewed as a linear $\pst$-code (i.e., we regard it as a subspace of the $\pst$-space $(\fdqn, wt_{\pst})$) and the dual $\dcd$ of $\cd$ as a linear $\dpst$-code

\item $\cd_J = \cd|J$ the puncturing of $\cd$ with respect to $\compl{J}$

\item $\cd^J = \{ u|J : u\in \cd, \supp{u} \subseteq J \}$ the shortening
of $\cd$ with respect to $\compl{J}$. Hereafter we will identify $\cd^J$ with the space
\begin{equation*}l
\{ u \in \cd : \supp{u} \subseteq J\}
\end{equation*}

\item $\Phi_r(\cd)$ the set of all $r$-dimensional subspaces of $\cd$, for $0\leq r \leq \dm{\cd}$

\item $\Omega(\pst)$ the set of ideals in $\pst$

\item $\Lambda^r(\pst)$ the set of ideals in $\pst$ of size $r$

\item $S_I = \{ x \in \fdqn : \ideal{\supp{x}} = I\}$, for $I \in \Omega(\pst)$

\item $M(I)$ the set of maximal elements in $I$, for $I \in \Omega(\pst)$

\item $I_M = I \setminus M(I)$, for $I \in \Omega(\pst)$

\item $\Lambda(I) = \{J \in \Omega(\pst) : I_M \subseteq J \subseteq I\}$, for $I\in \Omega(\pst)$

\item $\{\wtd_{r,\pst}(\cd)\}_{r=0}^{n}$ the $\pst$-weight distribution of $\cd$ with $\wtd_{r,\pst}(\cd) = |\{u\in \cd : \pwt{u} = r\}|$

\end{itemize}

A matroid $\mtd$ on $\fs$ is a finite set $\fs$ together with a function(called the rank function of $\mtd$) $\rho : 2^{\fs} \rightarrow \iz_{\geq 0}$ satisfying the following three properties: for $A, B \subseteq \fs$,

\begin{itemize}
\item[(R1)] $0 \leq \rho(A) \leq |A|$,
\item[(R2)] $A \subseteq B \Rightarrow \rho(A) \leq \rho(B)$,
\item[(R3)] $\rho(A\cup B) + \rho(A \cap B) \leq \rho(A) + \rho(B)$.
\end{itemize}
A corank function $\rho^*$ is the rank function of the dual matroid $\mathcal{M}^*$ of $\mathcal{M}$. It is well-known that, for a matroid $\mathcal{M}$ with the rank function $\rho$ and
the corank function $\rho^*$, we have the following: for $A\subseteq \fs$,
\begin{equation*}
\rho^*(\mathcal{S}\setminus A) = |\mathcal{S}| - |A| - \rho(\mathcal{S}) + \rho(A),
\end{equation*}
or equivalently
\begin{equation}\label{equ001}
\rho^*(A) = |A| - \rho(\mathcal{S}) + \rho(\mathcal{S}\setminus A).
\end{equation}
For $A\subseteq[n]$, let $G|A$ and $H|A$ be respectively the submatrices of $G$ and $H$ consisting of the columns indexed by $A$. Then we observe that
\begin{equation*}
\begin{split}
\rho(A) & = \rk{G|A} = \dm{\cd|A}, \\
\rho^\perp(A) & = \rk{H|A} = \dm{\cd^\perp|A}.
\end{split}
\end{equation*}

\begin{definition} Let $\cd$ be an $[n,k]$ linear code. For $r(1\leq r \leq k)$, the $r$-th generalized minimum poset weight($\pst$-weight, if
the reference to $\pst$ is needed) is defined by
\begin{equation*}
\begin{split}
d_r^\pst(\cd) = \min \{ \pwt{D} : D\in \Phi_r(\cd)\};
\end{split}
\end{equation*}
for $s(1\leq s \leq n-k)$,
\begin{equation*}
\begin{split}
d_s^{\dpst}(\dcd) = \min \{ \dpwt{D} : D\in \Phi_s(\dcd) \}.
\end{split}
\end{equation*}
\end{definition}
The following lemma contains all the stuffs that are needed in proving Theorems 2 and 3. Here, all the statements are trivial except perhaps (g), which is just (1) applied to the matroid $\mtdc$ of $\cd$.

\begin{lemma}\label{lem001}
Let $J\subseteq[n]$. Then we have the following.
\begin{enumerate}
\item[(a)] $\supp{\cd^J} \subseteq J$.
\item[(b)] For any subset $D\subseteq \cd$, $\supp{D}\subseteq J \Leftrightarrow D\subseteq \cd^J$.
\item[(c)] If $J$ is an ideal in $\pst$, then $\ideal{J} = J$.
\item[(d)] $J$ is an ideal in $\pst$ $\Leftrightarrow$ $\compl{J}$ is an ideal in $\dpst$.
\item[(e)] $\dm{\cd^J} = \dm{\cd} - \rho(\compl{J})$.
\item[(f)] If $\supp{D}\subseteq J$, for some $D\in \Phi_r(\cd)$, then $\rho(\compl{J}) \leq \dm{\cd}-r$.
\item[(g)] $\abs{J} - \rho^\perp(J) = \dm{\cd} -\rho(\compl{J}) = \dm{\cd^J}$.
\end{enumerate}
\begin{proof}
(a), (b), (c), (d) Clear. (e) Let $\psi : \cd \rightarrow \cd|\compl{J}$ be the linear map given by $u\mapsto u|\compl{J}$. Then the kernel of this map is $\cd^J$. (f) As $D\subseteq \cd^J$ by (b), $\dm{\cd^J}\geq r$. The result now follows from (e). (g) This follows from (\ref{equ001}) and (e).
\end{proof}
\end{lemma}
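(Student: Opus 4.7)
The plan is to dispatch parts (a)--(d) directly from the definitions, prove (e) via the obvious restriction map and rank--nullity, and then obtain (f) and (g) as short consequences of (b), (e), and the matroid duality identity~(\ref{equ001}).

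For (a) and (b), once $\cd^J$ is identified with $\{u\in\cd:\supp{u}\subseteq J\}$ as the paper does, (a) is immediate and (b) is literally the description of that identification. For (c), the smallest ideal of $\pst$ containing a set that is already an ideal is itself. For (d) I would unfold definitions: if $J$ is an ideal of $\pst$ and $i\dpord j$ with $j\in\compl{J}$, then $j\pord i$, so $i\in J$ would force $j\in J$ by downward closure of $J$, contradicting $j\in\compl{J}$; the converse is symmetric since the dual of $\dpst$ is $\pst$.

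For (e), I plan to consider the linear map $\psi:\cd\to\cd|\compl{J}$, $u\mapsto u|\compl{J}$. It is surjective by the definition of puncturing, and its kernel is $\{u\in\cd : u_i=0 \text{ for all } i\in\compl{J}\}$, which under the identification in (b) is exactly $\cd^J$. Combining rank--nullity with the formula $\rho(A)=\dm{\cd|A}$ recalled just before the lemma yields $\dm{\cd^J}=\dm{\cd}-\rho(\compl{J})$. Part (f) then falls out: by (b), $\supp{D}\subseteq J$ gives $D\subseteq\cd^J$, so $r\leq\dm{\cd^J}$, and (e) rewrites this as $\rho(\compl{J})\leq\dm{\cd}-r$. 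For (g) I would specialize (\ref{equ001}) to the matroid $\mtdc$: since $\rho([n])=\dm{\cd}$, the identity reads $\rho^\perp(J)=|J|-\dm{\cd}+\rho(\compl{J})$, i.e.\ $|J|-\rho^\perp(J)=\dm{\cd}-\rho(\compl{J})$, and the second equality in (g) is exactly (e).

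The only nontrivial input used anywhere is the matroid rank--corank identity~(\ref{equ001}), which the paper quotes as a standard fact about $\mtdc$ and its dual, so I do not anticipate any real obstacle. The lemma is essentially a repackaging of rank--nullity for the restriction map together with matroid duality, assembled in the form most convenient for Theorems~\ref{thm002} and~\ref{thm003}.
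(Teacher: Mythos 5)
Your proposal is correct and follows essentially the same route as the paper: parts (a)--(d) from the definitions, (e) via the restriction map $\psi:\cd\rightarrow\cd|\compl{J}$ with kernel $\cd^J$ and rank--nullity, (f) as a consequence of (b) and (e), and (g) by specializing (\ref{equ001}) to $\mtdc$ and combining with (e). The only difference is that you spell out the routine verifications (e.g., the unfolding of (d) and the identification $\rho(A)=\dm{\cd|A}$) that the paper leaves implicit.
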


%%%%%%%%%%%%%%%%%%%%%%%%%%%%%%%%%%%%%%%%%%%%%%%%%%%%%%%%%%%%%%%%%%%%%%
%%%%%%%%%%%%%%%%%%%%%%%%%%%%%%%%%%%%%%%%%%%%%%%%%%%%%%%%%%%%%%%%%%%%%%
%%%%%%%%%%%%%%%%%%%%%%%%%%%%%%%%%%%%%%%%%%%%%%%%%%%%%%%%%%%%%%%%%%%%%%

\section{Proof of duality}

We do not provide the proof of the following theorem. One refers its
proof to \cite{1}.

\begin{theorem}\label{thm001}
Let $\cd$ be an $[n,k]$ linear code. Then
\begin{equation*}
\begin{split}
1 \leq d_1^{\pst}(\cd) < d_2^{\pst}(\cd) < \cdots < d_k^{\pst}(\cd) \leq n,
\end{split}
\end{equation*}
and
\begin{equation*}
\begin{split}
1 \leq d_1^{\dpst}(\dcd) < d_2^{\dpst}(\dcd) < \cdots < d_{n-k}^{\dpst}(\dcd) \leq n.
\end{split}
\end{equation*}
\end{theorem}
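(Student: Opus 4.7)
The plan is to prove strict monotonicity of the chain $d_1^\pst(\cd)<\cdots<d_k^\pst(\cd)$ by a ``remove one maximal element from the minimizing ideal'' argument, and then obtain the dual chain instantly by applying the first to $\dcd$ regarded as an $[n,n-k]$ linear $\dpst$-code. The two outer bounds are handled at the start: any $D\in\Phi_1(\cd)$ contains a non-zero $u$, so $\pwt{D}=|\ideal{\supp{D}}|\geq|\supp{u}|\geq 1$, and $\cd\in\Phi_k(\cd)$ satisfies $\pwt{\cd}=|\ideal{\supp{\cd}}|\leq n$.

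For the strict inequality $d_r^\pst(\cd)<d_{r+1}^\pst(\cd)$ with $1\leq r\leq k-1$, I would fix $D\in\Phi_{r+1}(\cd)$ achieving $\pwt{D}=d_{r+1}^\pst(\cd)$ and let $I=\ideal{\supp{D}}$, so $|I|=d_{r+1}^\pst(\cd)$. Since $\supp{D}\subseteq I$, Lemma \ref{lem001}(b) yields $D\subseteq\cd^I$, hence $\dm{\cd^I}\geq r+1$.

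Next I would pick a maximal element $m\in M(I)$ and set $I'=I\setminus\{m\}$. The set $I'$ is still an ideal in $\pst$: if $j\in I'$ and $i\pord j$, then $i\in I$ because $I$ is an ideal, and $i=m$ would give $m\pord j$ with $j\ne m\in I$, contradicting the maximality of $m$. Consider the $\fdq$-linear map $\cd^I\to\fdq$, $u\mapsto u_m$; its kernel is precisely $\cd^{I'}$, so
\begin{equation*}
\dm{\cd^{I'}}\geq\dm{\cd^I}-1\geq r.
\end{equation*}
Picking any $D'\in\Phi_r(\cd^{I'})\subseteq\Phi_r(\cd)$, we have $\supp{D'}\subseteq I'$, and since $I'$ is an ideal, Lemma \ref{lem001}(c) gives $\pwt{D'}=|\ideal{\supp{D'}}|\leq|I'|=d_{r+1}^\pst(\cd)-1$. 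Thus $d_r^\pst(\cd)<d_{r+1}^\pst(\cd)$.

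The only subtle point is the combinatorial observation that deleting a maximal element of $I$ still produces an ideal; this cuts $|I|$ by exactly one while cutting $\dm{\cd^I}$ by at most one, and it is precisely this matching that forces strict rather than weak monotonicity. Notably, the argument rests only on parts (b) and (c) of Lemma \ref{lem001} and makes no use of the rank/corank identity in (g)---the matroid-theoretic content is reserved for the duality statement in Theorem \ref{thm003}. The dual chain follows immediately by applying the same argument to $\dcd$ as a linear $\dpst$-code of dimension $n-k$.
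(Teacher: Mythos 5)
Your argument is correct, but note that the paper itself gives no proof of Theorem \ref{thm001}: it explicitly defers to \cite{1}, so there is no in-paper argument to match against. What you have written is essentially the standard monotonicity proof (Wei's original Hamming-weight argument in \cite{12}, adapted to posets as in \cite{1}): take a minimizing $(r+1)$-dimensional $D$, pass to $I=\ideal{\supp{D}}$, delete a maximal element $m\in M(I)$ so that $I'=I\setminus\{m\}$ remains an ideal, and observe that the coordinate functional $u\mapsto u_m$ on $\cd^{I}$ has kernel $\cd^{I'}$, so the dimension drops by at most one while $\abs{I}$ drops by exactly one. Each step checks out: $M(I)\neq\emptyset$ since $r+1\geq 2$ forces $\supp{D}\neq\emptyset$; the kernel identification is exactly $\{u\in\cd^{I}:u_m=0\}=\cd^{I'}$; $\supp{D'}\subseteq I'$ follows from Lemma \ref{lem001} (a)--(b), and $\pwt{D'}\leq\abs{I'}$ from monotonicity of the ideal closure together with (c); the outer bounds $1\leq d_1^{\pst}(\cd)$ and $d_k^{\pst}(\cd)\leq n$ are immediate as you say; and the dual chain is indeed just the same statement applied to $\dcd$ as an $[n,n-k]$ linear $\dpst$-code. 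Your closing remark is also consistent with the paper's own emphasis: the monotonicity needs none of the rank/corank machinery, which the authors reserve for Theorems \ref{thm002} and \ref{thm003}, and your write-up supplies the elementary proof that the paper chose to omit by citation.
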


\begin{corollary}\label{cor001}
For $1\leq r\leq k$,
\begin{equation*}
\begin{split}
r \leq d_r^{\pst} (\cd) \leq n-k+r.
\end{split}
\end{equation*}
For $1\leq s\leq n-k$,
\begin{equation*}
\begin{split}
s \leq d_s^{\dpst} (\dcd) \leq k+s.
\end{split}
\end{equation*}
%$r \leq d_r^{\pst} (\cd) \leq n-k+r$, for every $r$ with $1\leq r \leq k$. ($s \leq d_s^{\dpst} (\cd^\perp) \leq k+r$, for every $s$ with $1\leq s \leq n-k$.)
\end{corollary}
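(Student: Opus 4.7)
The plan is to derive Corollary \ref{cor001} as an immediate consequence of Theorem \ref{thm001}, using nothing more than the fact that a strictly increasing sequence of integers grows by at least one at each step.

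First I would handle the lower bound $r \leq d_r^{\pst}(\cd)$. By Theorem \ref{thm001}, the values $d_1^{\pst}(\cd), d_2^{\pst}(\cd), \ldots, d_k^{\pst}(\cd)$ form a strictly increasing sequence of positive integers starting from a value at least $1$. Iterating the strict inequalities, $d_r^{\pst}(\cd) \geq d_1^{\pst}(\cd) + (r-1) \geq 1 + (r-1) = r$.

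Next, for the upper bound $d_r^{\pst}(\cd) \leq n-k+r$, I would work downward from the terminal value. Again by Theorem \ref{thm001}, $d_k^{\pst}(\cd) \leq n$ and the sequence is strictly increasing, so $d_r^{\pst}(\cd) \leq d_k^{\pst}(\cd) - (k-r) \leq n - (k-r) = n - k + r$.

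The dual statement $s \leq d_s^{\dpst}(\dcd) \leq k+s$ for $1 \leq s \leq n-k$ is obtained by the identical argument applied to the second chain of inequalities in Theorem \ref{thm001}, noting that $\dcd$ is an $[n,n-k]$ code so the role of $k$ is replaced by $n-k$, giving the upper bound $n-(n-k)+s = k+s$. There is no real obstacle here; the entire corollary is a direct counting consequence of the strict monotonicity already established in Theorem \ref{thm001}.
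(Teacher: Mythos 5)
Your proof is correct and matches the paper's intent: the paper states Corollary \ref{cor001} without proof as an immediate consequence of the strict monotonicity in Theorem \ref{thm001}, and your counting argument (gaining at least one per step from $d_1^{\pst}(\cd)\geq 1$, losing at least one per step down from $d_k^{\pst}(\cd)\leq n$, with $k$ replaced by $n-k$ for the dual) is exactly that standard derivation.
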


\begin{theorem}\label{thm002}
Let $\cd$ be an $[n,k]$ linear code. For $1\leq r\leq k$,
\begin{align}
d_r^{\pst} (\cd) & = \min \{ \abs{\ideal{J}} : \abs{J} - \rho^\perp(J) \geq r\} \label{equ002}\\
& = \min \{ \abs{\ideal{J}} : \abs{J} - \rho^\perp(J) = r\}. \label{equ003}
\end{align}
For $1\leq s\leq n-k$,
\begin{equation*}
\begin{split}
d_s^{\dpst} (\cd^\perp) & = \min \{ \abs{\dideal{J}} : \abs{J} - \rho^\perp(J) \geq s\} \\
& = \min \{ \abs{\dideal{J}} : \abs{J} - \rho^\perp(J) = s\}.
\end{split}
\end{equation*}
\end{theorem}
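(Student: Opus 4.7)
The plan is to use Lemma~\ref{lem001}(g) to convert the constraint $|J|-\rho^\perp(J)\geq r$ into the dimension condition $\dm{\cd^J}\geq r$, and then to bridge between $r$-dimensional subspaces of $\cd$ and shortenings at ideals of $\pst$. Once this translation is in hand, both directions of (\ref{equ002}) amount to choosing the right subspace or the right ideal, and (\ref{equ003}) will require only one extra step that invokes the strict monotonicity of Theorem~\ref{thm001}.

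For the inequality $d_r^\pst(\cd)\leq\min\{|\ideal{J}|:\dm{\cd^J}\geq r\}$, given such a $J$ I would pick any $r$-dimensional subspace $D\subseteq\cd^J$; Lemma~\ref{lem001}(a) gives $\supp{D}\subseteq J$, hence $\ideal{\supp{D}}\subseteq\ideal{J}$, so $d_r^\pst(\cd)\leq\pwt{D}\leq|\ideal{J}|$. For the reverse inequality, I fix $D\in\Phi_r(\cd)$ achieving $\pwt{D}=d_r^\pst(\cd)$ and set $J:=\ideal{\supp{D}}$; then $J$ is an ideal, so Lemma~\ref{lem001}(b) yields $D\subseteq\cd^J$ (whence $\dm{\cd^J}\geq r$) and Lemma~\ref{lem001}(c) yields $|\ideal{J}|=|J|=\pwt{D}=d_r^\pst(\cd)$. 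This establishes (\ref{equ002}).

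For (\ref{equ003}), one inclusion is immediate; for the other I would show that the ideal $J=\ideal{\supp{D}}$ just constructed actually satisfies $\dm{\cd^J}=r$. Suppose instead $\dm{\cd^J}\geq r+1$, which forces $r<k$; then any subspace $D'\subseteq\cd^J$ with $\dm{D'}=r+1$ satisfies $\supp{D'}\subseteq J$, so $\pwt{D'}\leq|\ideal{J}|=|J|=d_r^\pst(\cd)$. By Theorem~\ref{thm001}, however, $d_{r+1}^\pst(\cd)\leq\pwt{D'}\leq d_r^\pst(\cd)<d_{r+1}^\pst(\cd)$, a contradiction. The dual claim follows by running the same argument with $\cd^\perp$ in place of $\cd$ and $\dpst$ in place of $\pst$, since $\mtd_{\cd^\perp}$ is the dual matroid of $\mtdc$.

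The only nontrivial step in the whole argument is the sharpening from the inequality version to the equality version, which genuinely relies on the strict monotonicity in Theorem~\ref{thm001}; everything else is direct bookkeeping through Lemma~\ref{lem001}, and I do not anticipate any real obstacle.
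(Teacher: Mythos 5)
Your proof is correct and follows essentially the same route as the paper: both arguments rest on Lemma \ref{lem001}(g) to translate $\abs{J}-\rho^\perp(J)$ into $\dm{\cd^J}$, on shortening at $J=\ideal{\supp{D}}$ for a weight-minimizing $D\in\Phi_r(\cd)$, and on the strict monotonicity of Theorem \ref{thm001} to sharpen the inequality version (\ref{equ002}) to the equality version (\ref{equ003}) (you do this by exhibiting an $(r+1)$-dimensional subspace of $\cd^J$, the paper by re-applying its two bounds at a level $r'>r$ --- the same idea). One small remark: your symmetry step for the dual half, applied to $\dcd$ with $\dpst$, produces the condition $\abs{J}-\rho(J)\geq s$ (since the corank function of the matroid of $\dcd$ is $\rho$), which is the version the paper itself uses in the proof of Theorem \ref{thm003}; the $\rho^\perp$ in the printed dual statement is evidently a misprint, so you have proved the intended claim.
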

\begin{proof}
Firstly, we show that $d_{r}^{\pst} (\cd) \leq \min \{ \abs{\ideal{J}} : \abs{J} - \rho^\perp(J) = r \}$. Let $d$ denote the right hand side of
this. Let $J$ be such that $|J|-\rho^\perp(J) = r, \abs{\ideal{J}}=d$. Then, by Lemma \ref{lem001} (g), $\dm{\cd^J} = r$. So, $d_r^{\pst}(\cd) \leq \pwt{\cd^J} \leq \abs{\ideal{J}} = d$, by Lemma \ref{lem001} (a). Secondly, we show that $\min\{\abs{\ideal{J}} : \abs{J} - \rho^\perp(J) \geq r \} \leq d_r^{\pst}(\cd)$. Let $e$ denote the left hand side of this. To
show this, let $\pwt{D} = d_r^{\pst}(\cd)$, for some $D\in\Phi_r(\cd)$. Set $J=\ideal{\supp{D}}$. Then $D\subseteq \cd^J$, by Lemma \ref{lem001} (b) and $\dm{\cd^J} = \abs{J} - \rho^\perp(J) \geq r$ (cf. Lemma \ref{lem001} (g)). So, by Lemma \ref{lem001} (c), $e\leq \abs{\ideal{J}} = \abs{J} = d_r^{\pst}(\cd)$. Lastly, it is enough to see that $d\leq e$. Let $e=\abs{\ideal{J}}$, with $\abs{J} - \rho^\perp(J)\geq r$. Then we claim that $\abs{J}- \rho^\perp(J) = r$. Assume on the contrary that $\abs{J}-\rho^\perp(J) = r' > r$. Then, by the first and second steps, $d_{r'}^{\pst}(\cd) \leq \min\{\abs{\ideal{I}} : \abs{I} - \rho^\perp(I) = r'\} \leq \abs{\ideal{J}} = e \leq d_r^{\pst}(\cd)$,  a contradiction to Theorem \ref{thm001}.
\end{proof}

\begin{theorem}\label{thm003}
Let $\cd$ be an $[n,k]$ linear code and $A= \{d_r^{\pst}(\cd) : 1\leq r \leq k\}$, $B = \{ n+1-d_s^{\dpst}(\dcd) : 1\leq s \leq n-k\}$. Then $A$ and $B$ are disjoint and $[n] = A\cup B$.
\end{theorem}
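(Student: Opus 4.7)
The strategy is to describe both $A$ and $B$ as the plateau set and the jump set, respectively, of a single monotone unit-step function built from the matroid $\mtdc$; the theorem then collapses to the triviality that these two sets partition $[n]$. To this end I would define
\[
\nu(\ell) = \min\{\rho^\perp(I) : I \in \Omega(\pst),\, |I| = \ell\}, \qquad 0 \leq \ell \leq n,
\]
and check via (R1), (R2), and the obvious construction of ideals by adjoining or removing a single element that $\nu(0) = 0$, $\nu(n) = n-k$, and $\nu(\ell) - \nu(\ell-1) \in \{0,1\}$ for every $\ell \in [n]$.

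To identify $A$, I would combine Theorem \ref{thm002} with Lemma \ref{lem001}(c),(g): $d_r^{\pst}(\cd)$ is the smallest $m$ for which some ideal $I$ of size $\leq m$ satisfies $|I| - \rho^\perp(I) = \dm{\cd^I} \geq r$. Since $\ell - \nu(\ell)$ is non-decreasing in $\ell$ and equals the maximum of $|I| - \rho^\perp(I)$ over ideals of size $\ell$, $d_r^{\pst}(\cd)$ is the least $m$ with $m - \nu(m) = r$, so $A = \{m \in [n] : \nu(m) = \nu(m-1)\}$. For $B$, I would apply Theorem \ref{thm002} to $\dcd$ (whose matroid has rank $\rho^\perp$ and corank $\rho$) in the form $d_s^{\dpst}(\dcd) = \min\{|\dideal{J}| : |J| - \rho(J) \geq s\}$, then combine the bijection $I \mapsto \compl{I}$ of Lemma \ref{lem001}(d) with (\ref{equ001}) rewritten as $\rho(\compl{I}) = \rho^\perp(I) + k - |I|$ to derive
\[
|\compl{I}| - \rho(\compl{I}) = (n - k) - \rho^\perp(I) \qquad (I \in \Omega(\pst)).
\]
By the same reasoning as for $A$, this makes $d_s^{\dpst}(\dcd)$ the least $m$ with $(n-k) - \nu(n-m) = s$; reindexing via $\ell = n+1-m$ gives $B = \{m \in [n] : \nu(m) > \nu(m-1)\}$.

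Since every $m \in [n]$ is either a plateau of $\nu$ (hence in $A$) or a jump of $\nu$ (hence in $B$), but never both, the claims $A \cap B = \emptyset$ and $A \cup B = [n]$ follow immediately. The main bookkeeping hurdle will be the $B$-analysis: Theorem \ref{thm002} has to be applied to $\dcd$ using \emph{its} own rank/corank functions, and (\ref{equ001}) has to be used in the correct direction to translate ideals of $\dpst$ back to ideals of $\pst$ while swapping $\rho$ and $\rho^\perp$. Once the key identity displayed above is in place, the whole theorem reduces to the observation that a monotone unit-step function on $\{0,1,\dots,n\}$ is determined by its jump set.
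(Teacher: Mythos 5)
Your proof is correct, but it follows a genuinely different route from the paper's. The paper argues in Wei's original style: for each $s$ it sets $t=k+s-d_s^{\dpst}(\dcd)$ (using Corollary \ref{cor001} to get $0\leq t\leq k$), shows $d_t^{\pst}(\cd)\leq n-d_s^{\dpst}(\dcd)$ to exclude the indices $r\leq t$, excludes $r=t+l$ by deriving $d_{s+l-1}^{\dpst}(\dcd)\leq d_s^{\dpst}(\dcd)-1$ against the strict monotonicity of Theorem \ref{thm001}, and finally obtains $[n]=A\cup B$ from $|A|=k$, $|B|=n-k$, which again rests on Theorem \ref{thm001}. Your staircase function $\nu(\ell)=\min\{\rho^\perp(I):I\in\Omega(\pst),\,|I|=\ell\}$ replaces all of this: once $\nu(0)=0$, $\nu(n)=n-k$ and $\nu(\ell)-\nu(\ell-1)\in\{0,1\}$ are established, identifying $A$ with the plateau set and $B$ with the jump set makes disjointness and the covering of $[n]$ automatic, and you never invoke Theorem \ref{thm001} or Corollary \ref{cor001} at all; the strict monotonicity in fact drops out of your picture for free. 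Both arguments ultimately rest on the same rank--corank relation (\ref{equ001}): your key identity $|\compl{I}|-\rho(\compl{I})=(n-k)-\rho^\perp(I)$ is the same computation the paper performs inside its proof in the form $\rho^\perp(\compl{I})=(n-k)-(|I|-\rho(I))$. Two details to write out when you formalize: the unit-step bound $\nu(\ell)\leq\nu(\ell-1)+1$ needs submodularity (R3) (or simply the identification $\rho^\perp(J)=\rk{H|J}$, where adjoining one column raises the rank by at most one), not only (R1)--(R2); and the minima in Theorem \ref{thm002} must first be restricted from arbitrary subsets $J$ to ideals via $J\mapsto\ideal{J}$ together with $\dm{\cd^J}\leq\dm{\cd^{\ideal{J}}}$ and Lemma \ref{lem001}(c),(g), exactly as you indicate. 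Finally, you are right to apply Theorem \ref{thm002} to $\dcd$ with the condition $|J|-\rho(J)\geq s$: the corank function of the matroid of $\dcd$ is $\rho$, and the $\rho^\perp$ appearing in that display of Theorem \ref{thm002} as printed is evidently a typo, so your reading is the intended one.
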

\begin{proof}
It is enough to see that $A$ and $B$ are disjoint. Let $s$ be any
integer such that $1\leq s\leq n-k$. Then we need to see that $n+1-d_s^{\dpst}(\cd^\perp) \not\in A$. Firstly, let $t=k+s-d_s^{\dpst}(\cd^\perp)$ By Corollary \ref{cor001}, $s\leq d_s^{\dpst}(\cd^\perp) \leq k+s$, so that $0\leq t \leq k$. Then we claim that $d_t^{\pst}(\cd) \leq n - d_s^{\dpst}(\cd^\perp)$, so that $n+1-d_s^{\dpst}(\cd^\perp) \not= d_r^{\pst}(\cd)$, for $r\leq t$. Let $\dpwt{D} = d_s^{\dpst}(\cd^\perp)$, for some $D\in \Phi_s(\cd^\perp)$. Set $I = \dideal{\supp{D}}$.
Then $D\subseteq (\cd^\perp)^I$, by Lemma \ref{lem001} (b), and hence, by Lemma \ref{lem001} (g), $|I| - \rho(I) = \dm{(\cd^\perp)^I} \geq \dm{D} = s$. So, by Lemma \ref{lem001} (g), $\rho^\perp(\compl{I}) = |\compl{I}| - k + \rho(I) = (n-k) - (|I|-\rho(I)) \leq n-k-s$, and, with $J= \compl{I}, |J| - \rho^\perp(J) \geq (n- |I|)-(n-k-s)=k+s-d_s^{\dpst}(\cd^\perp) = t$. So, $d_t^{\pst}(\cd) \leq \abs{\ideal{J}} = |J| = n - d_s^{\dpst}(\cd^\perp)$, by Theorem \ref{thm002} and as $J$ is an ideal in $\pst$ by Lemma \ref{lem001} (d). Secondly, we must show that $n+1-d_s^{\dpst}(\cd^\perp) \not= d_{t+1}^{\pst}(\cd)$, for all $l$ with $1\leq l \leq k-t$. Suppose that $n+1-d_s^{\dpst}(\cd^\perp) = d_{t+l}^{\pst}(\cd)$, for some $l$. Then, for some $D\in \Phi_{t+l}(\cd)$, let $\pwt{D} = d_{t+l}^{\pst}(\cd) = n + 1- d_s^{\dpst}(\cd^\perp)$. If $J = \compl{I}$, with $I =\ideal{\supp{D}}$, then, by Lemma \ref{lem001} (f), $\rho(J) \leq k-t-l$. So $|J| - \rho(J) \geq (n-|I|) - (k-t-l) = (d_s^{\dpst}(\cd^\perp)-1) - (d_s^{\dpst}(\cd^\perp)-l-s) = s+l-1$. By Lemma \ref{lem001} (d), $J$ is an ideal
in $\dpst$, and, by Theorem \ref{thm002} and Lemma \ref{lem001} (c), $d_{s+l-1}^{\dpst}(\cd^\perp) \leq \abs{\dideal{J}} = \abs{J} = d_s^{\dpst}(\cd^\perp) - 1$,  which is a contradiction to Theorem \ref{thm001}.
\end{proof}

%%%%%%%%%%%%%%%%%%%%%%%%%%%%%%%%%%%%%%%%%%%%%%%%%%%%%%%%%%%%%%%%%%%%%%
%%%%%%%%%%%%%%%%%%%%%%%%%%%%%%%%%%%%%%%%%%%%%%%%%%%%%%%%%%%%%%%%%%%%%%
%%%%%%%%%%%%%%%%%%%%%%%%%%%%%%%%%%%%%%%%%%%%%%%%%%%%%%%%%%%%%%%%%%%%%%

\section{Weight distributions of linear MDS and Near-MDS poset codes}

The equation (\ref{equ004}) in the following follows from [7, (3.1)], while the equation (\ref{equ005}) is clear.

\begin{proposition}
Let $I$ be an ideal in $\pst$.
\begin{itemize}
\item[(a)]
\begin{equation}\label{equ004}
\begin{split}
|\cd \cap S_I| = \sum_{J\in \Lambda(I)} (-1)^{|I|-|J|} q^{k-\rho(\compl{J})}.
\end{split}
\end{equation}
\item[(b)]
\begin{equation}\label{equ005}
\begin{split}
A_{r,\pst}(\cd) = \sum_{I\in \Lambda^r(\pst)} |\cd \cap S_I|.
\end{split}
\end{equation}
\end{itemize}
\end{proposition}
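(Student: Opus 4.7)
The plan is to dispose of (b) by a direct partition of the codewords of poset weight $r$ according to the value of $\ideal{\supp{u}}$, and to obtain (a) by an inclusion--exclusion over the subsets of $M(I)$ on which the coordinates are constrained to vanish. The only algebraic ingredient is the dimension formula $\dm{\cd^J} = k - \rho(\compl{J})$ of Lemma \ref{lem001}(e), which yields $|\cd^J| = q^{k-\rho(\compl{J})}$ for every $J \subseteq [n]$, whether or not $J$ is an ideal.

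Part (b) is almost immediate: since $\pwt{u} = |\ideal{\supp{u}}|$, the condition $\pwt{u} = r$ is exactly $\ideal{\supp{u}} \in \Lambda^r(\pst)$. Hence $\{\cd \cap S_I : I \in \Lambda^r(\pst)\}$ is a disjoint partition of $\{u \in \cd : \pwt{u} = r\}$, and summing cardinalities produces $\wtd_{r,\pst}(\cd)$.

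For (a) I would first show that $u \in \cd \cap S_I$ if and only if $\supp{u} \subseteq I$ and $M(I) \subseteq \supp{u}$: the condition $\supp{u} \subseteq I$ is equivalent to $\ideal{\supp{u}} \subseteq I$ because $I$ is an ideal, and given it, the equality $\ideal{\supp{u}} = I$ forces every $m \in M(I)$ into $\supp{u}$ by maximality of $m$ in $I$, while conversely every $i \in I$ lies below some element of $M(I) \subseteq \supp{u}$. Standard inclusion--exclusion over the events $\{u_i = 0\}$ for $i \in M(I)$ then yields
\begin{equation*}
|\cd \cap S_I| = \sum_{K \subseteq M(I)} (-1)^{|K|}\,|\cd^{\,I \setminus K}|,
\end{equation*}
since imposing $u_i = 0$ for all $i \in K$ in addition to $\supp{u} \subseteq I$ is exactly $\supp{u} \subseteq I \setminus K$. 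Substituting $J := I \setminus K$ and invoking Lemma \ref{lem001}(e) converts the sum into (\ref{equ004}).

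The single nontrivial point is that the substitution $K \mapsto I \setminus K$ is a bijection from $2^{M(I)}$ onto $\Lambda(I)$; the inclusions $I_M \subseteq I \setminus K \subseteq I$ are clear, but one must verify that $I \setminus K$ is itself an ideal. This follows because any $i \pord j$ with $j \in I \setminus K$ forces $i \in I$, and if $i$ lay in $K \subseteq M(I)$ the maximality of $i$ in $I$ would give $i = j$, contradicting $j \notin K$. Once this bijection is established (so that $|K| = |I|-|J|$), equality (\ref{equ004}) is a direct reindexing, and no further matroid input beyond Lemma \ref{lem001}(e) is required.
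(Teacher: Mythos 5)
Your proof is correct, and it is genuinely more self-contained than what the paper offers: the paper does not prove this proposition at all, deferring (a) to the formula (3.1) of the cited Hyun--Kim paper [7] and declaring (b) ``clear.'' Your argument for (b) --- partitioning the codewords of $\pst$-weight $r$ by the value of $\ideal{\supp{u}}$ --- is exactly the ``clear'' observation the authors have in mind. For (a), your route is a direct inclusion--exclusion: you correctly characterize $\cd \cap S_I$ as $\{u \in \cd : M(I) \subseteq \supp{u} \subseteq I\}$ (maximality of $m \in M(I)$ forces $m \in \supp{u}$, and finiteness of $I$ gives $I \subseteq \ideal{M(I)}$ for the converse), you sum $(-1)^{|K|}\,\abs{\cd^{\,I\setminus K}}$ over $K \subseteq M(I)$, and you convert $\abs{\cd^{J}} = q^{k-\rho(\compl{J})}$ via Lemma \ref{lem001}(e), which indeed holds for arbitrary $J \subseteq [n]$. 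The one point that could have been glossed over --- that the reindexing $K \mapsto I\setminus K$ lands exactly on $\Lambda(I)$, i.e.\ that $I\setminus K$ is automatically an ideal when $K \subseteq M(I)$ and $I$ is an ideal --- you verify correctly, so the interval $\{J : I_M \subseteq J \subseteq I\}$ coincides with $\Lambda(I)$ and the sign $(-1)^{|K|} = (-1)^{|I|-|J|}$ comes out right. In effect you have reconstructed the M\"obius-inversion derivation that the paper outsources to [7] (the interval $[I_M, I]$ being a Boolean lattice, inclusion--exclusion over $2^{M(I)}$ is the same computation), so what your write-up buys is a self-contained proof resting only on Lemma \ref{lem001}(e), at the cost of a page of elementary verification that the published version avoids by citation.
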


In what follows, we will denote $d_1^{\pst}(\cd)$ simply by $d$. Let $\cd$ be a MDS $\pst$-code with parameters $[n,k,d=n-k+1]$. Then one easily shows from [1, Lemma 2.2 (4)] that, for $J\in \Omega(\pst)$,
\begin{equation*}
\begin{split}
k - \rho(\compl{J}) = \left\{
                        \begin{array}{ll}
                          0, & \hbox{$|J| \leq d-1$,} \\
                          |J| - d + 1, & \hbox{$|J| \geq d$.}
                        \end{array}
                      \right.\end{split}
\end{equation*}
So, for any $I\in \Lambda^r(\pst)$, from (\ref{equ004}) we have
\begin{equation}\label{equ006}
\begin{split}
|\cd \cap S_I| & = \sum_{J \in \Lambda(I)} (-1)^{r - |J|} q^{k - \rho(\compl{J})} \\
 & = \sum_{\substack{J \in \Lambda(I) \\ |J| \leq d-1}} (-1)^{r - |J|}  + \sum_{\substack{J \in \Lambda(I) \\ |J| \geq d}} (-1)^{r - |J|} q^{|J|-d+1} \\
 & = \sum_{J \in \Lambda(I)} (-1)^{r - |J|}  + \sum_{\substack{J \in \Lambda(I) \\ |J| \geq d}} (-1)^{r - |J|} (q^{|J|-d+1} - 1).
\end{split}
\end{equation}
Now, the first sum in (\ref{equ006}) is
\begin{equation}\label{equ007}
\begin{split}
\sum_{J \in \Lambda(I)} (-1)^{r - |J|} & = \sum_{l = |I_M|}^{r} \sum_{\substack{J \in \Lambda(I) \\ |J| = l}} (-1)^{r - l}\\
&  = \sum_{l = |I_M|}^{r} (-1)^{r-l} \sum_{\substack{J \in \Lambda(I) \\ |J| = l}} 1 \\
& = \sum_{l = |I_M|}^{r} (-1)^{r-l} \binom{|M(I)|}{l-|I_M|} \\
& = \sum_{s=0}^{r-|I_M|} (-1)^{r-|I_M|-s} \binom{|M(I)|}{s} \\
& = (-1)^{|M(I)|} \sum_{s=0}^{|M(I)|} (-1)^{s} \binom{|M(I)|}{s} = 0.
\end{split}
\end{equation}
The second sum in (\ref{equ006}) is
\begin{equation}\label{equ008}
\begin{split}
\sum_{\substack{J \in \Lambda(I) \\ |J| \geq d}} & (-1)^{r - |J|}  (q^{|J|-d+1} - 1) \\
&  = \sum_{l=0}^{r-d} \sum_{\substack{J \in \Lambda(I) \\ |J| = d+l}} (-1)^{r-d-l}(q^{l+1} - 1) \\
& = \sum_{l=0}^{r-d} (-1)^{r-d-l}(q^{l+1} - 1)\sum_{\substack{J \in \Lambda(I) \\ |J| = d+l}} 1 \\
& = \sum_{l=0}^{r-d} (-1)^{r-d-l}(q^{l+1} - 1) \binom{|M(I)|}{r-d-l}\\
& = \sum_{s=0}^{r-d} (-1)^{s}(q^{r-d+1-s} - 1) \binom{|M(I)|}{s}.
\end{split}
\end{equation}
Thus we obtain the following theorem from (\ref{equ005})-(\ref{equ008}).

\begin{theorem}
Let $\cd$ be a MDS $\pst$-code with parameters $[n,k,d=n-k+1]$. Then, for $r$, with $d\leq r \leq n$,
\begin{equation*}
\begin{split}
\wtd_{r,\pst}(\cd) = \sum_{I \in \Lambda^r(\pst)} \sum_{s=0}^{r-d} (-1)^{s} \binom{|M(I)|}{s} (q^{r-d+1-s}-1).
\end{split}
\end{equation*}
\end{theorem}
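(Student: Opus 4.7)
The plan is to combine the two parts of the preceding proposition with the MDS hypothesis, and then carry out a purely combinatorial simplification inside each ideal $I \in \Lambda^r(\pst)$. By (\ref{equ005}), $\wtd_{r,\pst}(\cd) = \sum_{I \in \Lambda^r(\pst)} |\cd \cap S_I|$, so it suffices to evaluate $|\cd \cap S_I|$ for each fixed $I$ with $|I| = r$ and then sum.

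To evaluate $|\cd \cap S_I|$ via (\ref{equ004}), I would plug in the MDS dichotomy (from [1, Lemma 2.2 (4)])
\[
k - \rho(\compl{J}) = \begin{cases} 0, & |J| \leq d-1, \\ |J| - d + 1, & |J| \geq d, \end{cases}
\]
and split $\sum_{J \in \Lambda(I)}$ at $|J| = d-1$. Writing $q^{k-\rho(\compl{J})} = 1 + (q^{|J|-d+1} - 1)$ on the upper range cleanly separates the sum into an unweighted alternating sum over all of $\Lambda(I)$ plus a weighted tail, exactly as in (\ref{equ006}).

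The key combinatorial input is that, by definition of $\Lambda(I)$, the ideals $J$ with $I_M \subseteq J \subseteq I$ are in bijection with subsets of $M(I)$ (pick which maximal elements of $I$ to include); hence the number of $J \in \Lambda(I)$ with $|J| = l$ equals $\binom{|M(I)|}{l - |I_M|}$. Using this, the unweighted sum telescopes via $\sum_{s=0}^{|M(I)|} (-1)^s \binom{|M(I)|}{s} = 0$ (valid since $|I| = r \geq d \geq 1$ forces $M(I) \neq \emptyset$), as in (\ref{equ007}). The tail becomes $\sum_{s=0}^{r-d} (-1)^s \binom{|M(I)|}{s}(q^{r-d+1-s} - 1)$ after reindexing $s = r - d - l$, as in (\ref{equ008}). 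Summing over $I \in \Lambda^r(\pst)$ yields the stated formula.

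There is no real obstacle: the MDS property pins down $\rho$ completely on ideal complements, and everything else is bookkeeping with the $\binom{|M(I)|}{\cdot}$ layer counts. The only mild care needed is the index shift $s = r - d - l$ and the observation $|M(I)| \geq 1$ that kills the first sum; beyond that, the argument is a direct substitution into (\ref{equ004}) and (\ref{equ005}).
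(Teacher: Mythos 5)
Your proposal is correct and follows essentially the same route as the paper: substitute the MDS values of $k-\rho(\compl{J})$ into (\ref{equ004}), split the sum over $\Lambda(I)$ as in (\ref{equ006}), kill the unweighted alternating sum via the layer counts $\binom{|M(I)|}{l-|I_M|}$ and $\sum_{s}(-1)^s\binom{|M(I)|}{s}=0$, reindex the tail, and sum over $I$ using (\ref{equ005}). Your explicit remark that $M(I)\neq\emptyset$ (since $r\geq d\geq 1$) is a small point the paper leaves implicit, but otherwise the arguments coincide.
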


Recall that an $[n,k]$ $\pst$-code is called a Near-MDS $\pst$-code if
\begin{equation*}
\begin{split}
d = d_1^\pst(\cd) = n - k, \quad d_2^\pst(\cd) = n - k + 2.
\end{split}
\end{equation*}

\begin{lemma}[\text{[1, Lemma 2.4 (1), (2)]}]\label{lem002}
The following hold.
\begin{itemize}
\item[(a)] $\cd$ is an $[n,k]$ Near-MDS $\pst$-code if and only if
\begin{itemize}
\item[(1)] Any $n-k-1$ columns of the parity check matrix $H$ are linearly independent.
\item[(2)] There exist $n-k$ linearly dependent columns of $H$.
\item[(3)] Any $n-k+1$ columns of $H$ have the full rank $n-k$.
\end{itemize}
\item[(b)] If $\cd$ is a linear Near-MDS $\pst$-code, then $\dcd$ is a linear Near-MDS $\dpst$-code.
\end{itemize}
 \end{lemma}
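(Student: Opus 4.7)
The plan is to derive (a) by translating the two distance equalities defining a Near-MDS $\pst$-code into rank statements about column subsets of the parity check matrix $H$ via Theorem \ref{thm002}, and then to derive (b) from (a) via the rank--corank relation (\ref{equ001}).

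For (a), I would start from Theorem \ref{thm002} in the form $d_r^\pst(\cd) = \min\{|\ideal{J}| : |J| - \rho^\perp(J) \geq r\}$, and use the identification that $|J| - \rho^\perp(J)$ equals the dimension of the space of linear dependencies among the columns of $H|J$. The equality $d_1^\pst(\cd) = n-k$ is the next-to-largest value allowed by Corollary \ref{cor001}; together with the inequality $|\ideal{J}| \geq |J|$, it forces no column dependency among any $J$ with $|J| \leq n-k-1$, giving (1), and simultaneously yields a dependency among some $n-k$ columns, giving (2). The equality $d_2^\pst(\cd) = n-k+2$ is the Singleton value for $r=2$; applying Theorem \ref{thm002} with $r=2$ forces $\rho^\perp(J) = n-k$ for every $J$ with $|J| = n-k+1$, giving (3). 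The reverse direction runs the same bookkeeping in the opposite order, using the fact that every poset admits ideals of every size to produce witnesses realizing the relevant minima.

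For (b), I apply (a): $\cd$ is a Near-MDS $\pst$-code if and only if conditions (1), (2), (3) hold for the columns of $H$, which are statements purely about the matroid $\mtdc$. Using (\ref{equ001}) in the form $\rho^\perp(A) = |A| - k + \rho(\compl{A})$, one checks that condition (1) on $H$ is equivalent to ``any $k+1$ columns of $G$ have rank $k$'', condition (2) on $H$ is equivalent to ``some $k$ columns of $G$ are linearly dependent'', and condition (3) on $H$ is equivalent to ``any $k-1$ columns of $G$ are linearly independent''. Since $G$ is a parity check matrix for $\dcd$, these are exactly the three conditions of (a) applied to $\dcd$ with the dual poset $\dpst$. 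Hence $\dcd$ is a Near-MDS $\dpst$-code.

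The main obstacle lies in the ``if'' direction of part (a): one must produce an ideal $J$ with $|\ideal{J}| = n-k$ and $|J| - \rho^\perp(J) \geq 1$ starting from the knowledge that some $n-k$ columns of $H$ are linearly dependent, even though the dependent subset need not itself be an ideal. Overcoming this requires carefully exploiting condition (3) (which pins down the rank of every subset of size $n-k+1$) together with the existence of ideals of every size in $\pst$, to relocate the dependency into an ideal of the prescribed size. Once this step is handled, the rest of the argument is routine, and (b) follows immediately from (a) through matroid duality.
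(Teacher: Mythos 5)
Your strategy (translate the two defining equalities into rank conditions on $H$ via Theorem \ref{thm002}, then dualize with (\ref{equ001})) is the natural one, but the execution of part (a) has a genuine gap, and in fact the statement you are trying to prove cannot be proved in the form you use it: conditions (1)--(3) quantify over \emph{arbitrary} column subsets, which is too strong for a general poset. The flaw is in the step ``$d_1^{\pst}(\cd)=n-k$ forces no dependency among any $n-k-1$ columns'': a dependency on a subset $J$ only gives, via Theorem \ref{thm002}, the bound $d_1^{\pst}(\cd)\leq\abs{\ideal{J}}$, and since $\abs{\ideal{J}}$ can be much larger than $\abs{J}$, the inequality $\abs{\ideal{J}}\geq\abs{J}$ goes the wrong way and no contradiction arises; the same problem kills your derivation of (3) from $d_2^{\pst}(\cd)=n-k+2$. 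Concretely, let $\pst$ be the chain $1\pord 2\pord 3\pord 4$ and $\cd=\{(0,a,0,b):a,b\in\fdq\}$, a $[4,2]$ code. Then $d_1^{\pst}(\cd)=2=n-k$ and $d_2^{\pst}(\cd)=4=n-k+2$, so $\cd$ is Near-MDS for this $\pst$; yet $\dcd=\{(x,0,y,0):x,y\in\fdq\}$, so every parity check matrix $H$ of $\cd$ has a zero second column and conditions (1) and (3) fail. The ``if'' direction fails as well (a Hamming Near-MDS code satisfies (1)--(3) but is typically MDS, not Near-MDS, for a chain), so the obstacle you flag there cannot be overcome as stated.

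What is true, and all the paper actually needs (note (\ref{equ009}) is only invoked for $J\in\Omega(\pst)$ to obtain (\ref{equ010})), is the version of (1)--(3) restricted to \emph{ideals} of $\pst$: columns indexed by any ideal of size at most $n-k-1$ are independent, some ideal of size $n-k$ carries a dependency, and columns indexed by any ideal of size $n-k+1$ have rank $n-k$. With this restriction your translation works, because in Theorem \ref{thm002} one may replace $J$ by $\ideal{J}$ (this does not decrease $\abs{J}-\rho^\perp(J)=\dm{\cd^J}$), and your matroid-duality argument for (b) then goes through essentially verbatim, since by Lemma \ref{lem001} (d) and (g) complements of $\pst$-ideals are $\dpst$-ideals and the ideal conditions on $H$ transform into the corresponding ideal conditions on $G$. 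Be aware also that the paper offers no proof of this lemma at all --- it is quoted from [1], where (b) is deduced from the duality theorem; indeed (b) follows directly from Theorems \ref{thm001}, \ref{thm003} and Corollary \ref{cor001} by computing $\{d_r^{\pst}(\cd)\}=\{n-k\}\cup\{n-k+2,\ldots,n\}$ and reading off the dual weights, with no reference to (a).
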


Now, we assume that $\cd$ is a Near-MDS $\pst$-code with parameters $[n,k,d=n-k]$. Then, from Lemma \ref{lem002} (a) above, we get
\begin{equation}\label{equ009}
\begin{split}
\rho^\perp(J) =  \left\{
                        \begin{array}{ll}
                          |J|, & \hbox{$|J| < n-k$,} \\
                          n-k, & \hbox{$|J| > n-k$.}
                        \end{array}
                      \right.
\end{split}
\end{equation}
By invoking Lemma \ref{lem001} (g) again, from (\ref{equ009}) we have, for $J\in \Omega(\pst)$,
\begin{equation}\label{equ010}
\begin{split}
k-\rho(\compl{J}) =  \left\{
                        \begin{array}{ll}
                          0, & \hbox{$|J| \leq d-1$,} \\
                          |J|-d, & \hbox{$|J| \geq d+1$.}
                        \end{array}
                      \right.
\end{split}
\end{equation}
We note here that (\ref{equ010}) also follows from (\ref{equ009}) and Lemma \ref{lem002} (b). However,  Lemma \ref{lem002} (b) is deduced in \cite{1} from the duality result in Theorem \ref{thm003}, which in turn follows from Lemma \ref{lem001} (g), as we stressed in Section I.

Then, by proceeding analogously to the MDS case, we get the following result.

\begin{theorem}[\cite{1}]\label{thm005}
Let $\cd$ be a Near-MDS $\pst$-code with parameters $[n,k,d=n-k]$. Then, for $r$, with $d\leq r \leq n$,
\begin{equation}\label{equ011}
\begin{split}
\wtd_{r,\pst}(\cd) = & \sum_{I\in \Lambda^r(\pst)} \sum_{s=0}^{r-d-1} (-1)^s \binom{|M(I)|}{s}(q^{r-d-s} -1 ) \\ & + (-1)^{r-d} \sum_{I\in \Lambda^r(\pst)} \sum_{\substack{J\in \Lambda(I) \\ |J| = d}} \wtd_J(\cd),
\end{split}
\end{equation}
where $\wtd_J(\cd) = |\cd \cap S_J|$.
\end{theorem}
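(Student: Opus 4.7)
The plan is to mirror the MDS computation of equations~(\ref{equ006})--(\ref{equ008}), with one extra complication appearing at the critical size $|J|=d$. The starting point is (\ref{equ005}) combined with (\ref{equ004}). For each $I\in\Lambda^r(\pst)$, I would split the exponent by writing $q^{k-\rho(\compl{J})}=1+(q^{k-\rho(\compl{J})}-1)$, which gives
\begin{equation*}
|\cd\cap S_I|=\sum_{J\in\Lambda(I)}(-1)^{r-|J|}+\sum_{J\in\Lambda(I)}(-1)^{r-|J|}\bigl(q^{k-\rho(\compl{J})}-1\bigr).
\end{equation*}
The first sum vanishes by exactly the binomial identity used in (\ref{equ007}), since $|M(I)|\geq 1$ whenever $I\neq\emptyset$ (and here $|I|=r\geq d\geq 1$).

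Next, I would use (\ref{equ010}) to discard the terms with $|J|\leq d-1$ (for which the factor $q^{k-\rho(\compl{J})}-1$ is zero) and then split the surviving range $|J|\geq d$ into two blocks: $|J|\geq d+1$ and $|J|=d$. For the block $|J|\geq d+1$, (\ref{equ010}) gives $q^{k-\rho(\compl{J})}-1=q^{|J|-d}-1$, and counting the ideals in $\Lambda(I)$ of size $d+l$ proceeds exactly as in (\ref{equ008}): the count is $\binom{|M(I)|}{r-d-l}$, so after the re-indexing $s=r-d-l$ (now with $l\geq 1$, hence $s\leq r-d-1$) one obtains the first double sum in (\ref{equ011}).

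The nontrivial step is the block $|J|=d$. On such ideals $k-\rho(\compl{J})$ is not determined by $|J|$ alone: by Lemma~\ref{lem002}~(a) it can be either $0$ or $1$ depending on whether the $d=n-k$ columns of $H$ indexed by $J$ happen to be dependent, so the exponent cannot be pulled out uniformly. The key trick is to re-express $q^{k-\rho(\compl{J})}-1$ directly as $\wtd_J(\cd)$ via a self-referential use of (\ref{equ004}) applied with $I=J$: every proper sub-ideal $J'\in\Lambda(J)$ satisfies $|J'|\leq d-1$, hence $k-\rho(\compl{J'})=0$ by (\ref{equ010}), so only $J'=J$ contributes $q^{k-\rho(\compl{J})}$ while the remaining terms sum via the binomial identity of (\ref{equ007}) to $-1$. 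This yields
\begin{equation*}
\wtd_J(\cd)=|\cd\cap S_J|=q^{k-\rho(\compl{J})}-1\qquad\text{for }|J|=d.
\end{equation*}
Substituting this identity into the $|J|=d$ portion produces the second term of (\ref{equ011}), and the theorem follows after summing over $I\in\Lambda^r(\pst)$.

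The main obstacle is precisely this nonuniformity at $|J|=d$: the Near-MDS hypothesis does not pin down $k-\rho(\compl{J})$ on ideals of the critical size. Once one notices that the only piece of data we actually need, namely $q^{k-\rho(\compl{J})}-1$, coincides with $\wtd_J(\cd)$ by a self-referential application of (\ref{equ004}), all remaining work reduces to the same binomial bookkeeping already carried out in the MDS case.
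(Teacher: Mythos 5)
Your proposal is correct and follows essentially the same route as the paper, which simply says the result is obtained ``by proceeding analogously to the MDS case'': you combine (\ref{equ004})--(\ref{equ005}) with (\ref{equ010}), repeat the binomial bookkeeping of (\ref{equ006})--(\ref{equ008}) for $|J|\neq d$, and handle the critical size via the identity $\wtd_J(\cd)=q^{k-\rho(\compl{J})}-1$ for $|J|=d$, which is exactly the identification the paper's formula (\ref{equ011}) presupposes. Your explicit verification of that identity (via (\ref{equ004}) with $I=J$, or equivalently by noting $\cd\cap S_J=\cd^J\setminus\{0\}$ since no nonzero codeword has $\pst$-weight below $d$) is the only detail the paper leaves unstated, and you supply it correctly.
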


In the case of Hamming weight(i.e., $wt_\pst$ with $\pst$ the antichain on $[n]$), denoting $\wtd_{r,\pst}(\cd)$ by $\wtd_r(\cd)$ as usual, we recover the following corollary in \cite{4}.

\begin{corollary}[\cite{4}]\label{cor002}
Let $\cd$ be a Near-MDS code with parameters $[n,k,d=n-k]$. Then, for $r$, with $d \leq r \leq n$,
\begin{equation*}
\begin{split}
\wtd_{r}(\cd) = & \binom{n}{r} \sum_{s=0}^{r-d-1} (-1)^s \binom{r}{s}(q^{r-d-s} -1 ) \\ & + (-1)^{r-d} \binom{n-d}{r-d} \wtd_d(\cd),
\end{split}
\end{equation*}
\begin{proof}
Now, let $\pst$ denote the antichain. Then the second double sum in (\ref{equ011}) is
\begin{equation*}
\begin{split}
\sum_{|I|=r} \sum_{\substack{  u \in \cd \\ w_H(u) = d \\ \supp{u}\subseteq I }} 1 = \binom{n-d}{r-d} \wtd_d(\cd).
\end{split}
\end{equation*}
by counting $I$, with $|I|=r$, for each fixed $u\in \cd$, with $w_H(u)=d$.
\end{proof}
\end{corollary}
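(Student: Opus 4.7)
The plan is to specialize Theorem \ref{thm005} to the case where $\pst$ is the antichain on $[n]$. In that poset every subset of $[n]$ is an ideal, so $\Lambda^r(\pst)$ is just the collection of all $r$-subsets of $[n]$, of cardinality $\binom{n}{r}$. Moreover, for any ideal $I$ every element of $I$ is maximal, so $M(I)=I$, $|M(I)|=|I|=r$, $I_M=\emptyset$, and $\Lambda(I)$ coincides with the full power set of $I$.

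First I would simplify the first double sum in (\ref{equ011}). Substituting $|M(I)|=r$ into the inner sum gives $\sum_{s=0}^{r-d-1}(-1)^s\binom{r}{s}(q^{r-d-s}-1)$, an expression independent of $I$. Multiplying by $|\Lambda^r(\pst)|=\binom{n}{r}$ yields the first term of the stated formula.

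Next, for the second double sum, I would use that under the antichain $\ideal{\supp{u}}=\supp{u}$, so $S_J=\{u\in\fdqn:\supp{u}=J\}$ and $\wtd_J(\cd)=|\{u\in\cd:\supp{u}=J\}|$. The inner sum $\sum_{J\in\Lambda(I),\,|J|=d}\wtd_J(\cd)$ therefore counts codewords $u\in\cd$ with $w_H(u)=d$ and $\supp{u}\subseteq I$, exactly as the proof in the paper writes. Swapping the order of summation, for each fixed $u\in\cd$ with $w_H(u)=d$ the number of $r$-subsets $I\subseteq[n]$ containing $\supp{u}$ equals $\binom{n-d}{r-d}$, so the double sum collapses to $\binom{n-d}{r-d}\wtd_d(\cd)$.

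Combining these two contributions with the sign $(-1)^{r-d}$ on the second produces exactly the claimed identity. I do not anticipate any real obstacle; the one point deserving mild care is verifying the boundary behavior (for $r=d$ the first sum is empty, as it should be, and the second reduces to $\wtd_d(\cd)=\wtd_d(\cd)$), which is consistent with the conventions used in the statement of Theorem \ref{thm005}.
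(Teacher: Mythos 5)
Your proposal is correct and follows essentially the same route as the paper: specialize Theorem \ref{thm005} to the antichain (where $|M(I)|=r$ and $\Lambda^r(\pst)$ has $\binom{n}{r}$ elements) and collapse the second double sum by counting, for each fixed weight-$d$ codeword, the $\binom{n-d}{r-d}$ sets $I$ of size $r$ containing its support. The only difference is that you spell out the simplification of the first double sum, which the paper leaves implicit.
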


\ifCLASSOPTIONcaptionsoff
  \newpage
\fi

% trigger a \newpage just before the given reference
% number - used to balance the columns on the last page
% adjust value as needed - may need to be readjusted if
% the document is modified later
%\IEEEtriggeratref{8}
% The "triggered" command can be changed if desired:
%\IEEEtriggercmd{\enlargethispage{-5in}}

% references section

% can use a bibliography generated by BibTeX as a .bbl file
% BibTeX documentation can be easily obtained at:
% http://www.ctan.org/tex-archive/biblio/bibtex/contrib/doc/
% The IEEEtran BibTeX style support page is at:
% http://www.michaelshell.org/tex/ieeetran/bibtex/
%\bibliographystyle{IEEEtran}
% argument is your BibTeX string definitions and bibliography database(s)
%\bibliography{IEEEabrv,../bib/paper}
%
% <OR> manually copy in the resultant .bbl file
% set second argument of \begin to the number of references
% (used to reserve space for the reference number labels box)

% biography section
%
% If you have an EPS/PDF photo (graphicx package needed) extra braces are
% needed around the contents of the optional argument to biography to prevent
% the LaTeX parser from getting confused when it sees the complicated
% \includegraphics command within an optional argument. (You could create
% your own custom macro containing the \includegraphics command to make things
% simpler here.)
%\begin{biography}[{\includegraphics[width=1in,height=1.25in,clip,keepaspectratio]{mshell}}]{Michael Shell}
% or if you just want to reserve a space for a photo:

% if you will not have a photo at all:
\begin{IEEEbiographynophoto}{Dae San Kim(M'05)}
received the B.S. and M. S. degrees in mathematics
from Seoul National University, Seoul, Korea, in 1978 and 1980,
respectively, and the Ph.D. degree in mathematics from University of
Minnesota, Minneapolis, MN, in 1989. He is a professor in the
Department of Mathematics at Sogang University, Seoul, Korea. He has
been there since 1997, following a position at Seoul Women's
University. His research interests include number theory(exponential
sums, modular forms, zeta functions) and coding theory.
\end{IEEEbiographynophoto}

\begin{IEEEbiographynophoto}{Dong Chan Kim}
received the B. S. and M. S. degrees in mathematics from Sogang University, Seoul, Korea, in 2001 and 2003, respectively, and is currently pursuing a Ph. D. degree in mathematics at Sogang University. He has been working as a researcher at the Attached Institute of ETRI.
\end{IEEEbiographynophoto}

% insert where needed to balance the two columns on the last page with
% biographies
%\newpage

\begin{IEEEbiographynophoto}{Jong Yoon Hyun}
received the B. Sc. degree from Dongguk University in 1997 and the M. Sc. and Ph. D. degrees from POSTECH, in 2002 and 2006, respectively. He is currently with the Institute of Mathematical Sciences, Ewha Womans University, Korea. His research interests include coding theory and algebraic graph theory.
\end{IEEEbiographynophoto}

%\begin{IEEEbiography}{Dong Chan Kim}
%Biography text here.
%\end{IEEEbiography}

% You can push biographies down or up by placing
% a \vfill before or after them. The appropriate
% use of \vfill depends on what kind of text is
% on the last page and whether or not the columns
% are being equalized.

%\vfill

% Can be used to pull up biographies so that the bottom of the last one
% is flush with the other column.
%\enlargethispage{-5in}

% that's all folks
\end{document}